\pgfplotsset{compat=newest}
\bfseries\color{black},
\begin{document}

\newcommand{\punt}[1]{}
\newcommand{\cmnt}[1]{}
\algnewcommand{\IIf}[1]{\State\algorithmicif\ #1\ \algorithmicthen}
\algnewcommand{\EndIIf}{\unskip\ \algorithmicend\ \algorithmicif}
\newcommand{\nosplit}{\linebreak}

\def\nohyphens{\hyphenpenalty=10000\exhyphenpenalty=10000}

\newcommand{\tilda}{\symbol{126}}


\newcommand{\ang}[1]{\langle #1 \rangle}
\newcommand{\Ang}[1]{\Big\langle #1 \Big\rangle}
\newcommand{\ceil}[1]{\lceil #1 \rceil}
\newcommand{\floor}[1]{\lfloor #1 \rfloor}

\newcommand{\cng} {concurrent graph \xspace}
\newcommand{\cgds} {concurrent graph data structure\xspace}
\newcommand{\pset} {partial-set\xspace}

\newcommand{\lble} {linearizable~}
\newcommand{\lbty} {linearizability~}
\newcommand{\rbty} {reachability~}
\newcommand{\legality} {legality\xspace}


\definecolor{darkblue}{rgb}{0.0, 0.0, 0.55}
\newcommand{\linecomment}[1]{{\scriptsize \textcolor{darkblue}{#1}}}
\newtheorem{thm}{Theorem}
\newtheorem{observation}[thm]{Observation}

\newcounter{history}
\newcommand{\hist}[1]{\refstepcounter{history} {#1}}

\newcommand{\chapref}[1]{Chapter~\ref{chap:#1}}
\newcommand{\secref}[1]{Section~\ref{sec:#1}}
\newcommand{\figref}[1]{Figure~\ref{fig:#1}}
\newcommand{\tabref}[1]{Table~\ref{table:#1}}
\newcommand{\stref}[1]{Step~\ref{step:#1}}
\newcommand{\csref}[1]{Case~\ref{case:#1}}
\newcommand{\thmref}[1]{Theorem~\ref{thm:#1}}
\newcommand{\lemref}[1]{Lemma~\ref{lem:#1}}
\newcommand{\corref}[1]{Corollary~\ref{cor:#1}}
\newcommand{\axmref}[1]{Proposition~\ref{axm:#1}}
\newcommand{\defref}[1]{Definition~\ref{def:#1}}
\newcommand{\eqnref}[1]{Eqn(\ref{eq:#1})}
\newcommand{\eqvref}[1]{Equivalence~(\ref{eqv:#1})}
\newcommand{\ineqref}[1]{Inequality~(\ref{ineq:#1})}
\newcommand{\exref}[1]{Example~\ref{ex:#1}}
\newcommand{\propref}[1]{Property~\ref{prop:#1}}
\newcommand{\obsref}[1]{Observation~\ref{obs:#1}}
\newcommand{\asmref}[1]{Assumption~\ref{asm:#1}}
\newcommand{\thref}[1]{Thread~\ref{th:#1}}
\newcommand{\trnref}[1]{Transaction~\ref{trn:#1}}
\newcommand{\linref}[1]{Line~\ref{lin:#1}}
\newcommand{\algoref}[1]{Algorithm~\ref{algo:#1}}
\newcommand{\subsecref}[1]{SubSection{\ref{subsec:#1}}}

\newcommand{\histref}[1]{\ref{hist:#1}}

\newcommand{\apnref}[1]{Appendix~\ref{apn:#1}}
\newcommand{\invref}[1]{Invariant~\ref{inv:#1}}

\newcommand{\Chapref}[1]{Chapter~\ref{chap:#1}}
\newcommand{\Secref}[1]{Section~\ref{sec:#1}}
\newcommand{\Figref}[1]{Figure~\ref{fig:#1}}
\newcommand{\Tabref}[1]{Table~\ref{tab:#1}}
\newcommand{\Stref}[1]{Step~\ref{step:#1}}
\newcommand{\Thmref}[1]{Theorem~\ref{thm:#1}}
\newcommand{\Lemref}[1]{Lemma~\ref{lem:#1}}
\newcommand{\Corref}[1]{Corollary~\ref{cor:#1}}
\newcommand{\Axmref}[1]{Proposition~\ref{axm:#1}}
\newcommand{\Defref}[1]{Definition~\ref{def:#1}}
\newcommand{\Eqref}[1]{(\ref{eq:#1})}
\newcommand{\Eqvref}[1]{Equivalence~(\ref{eqv:#1})}
\newcommand{\Ineqref}[1]{Inequality~(\ref{ineq:#1})}
\newcommand{\Exref}[1]{Example~\ref{ex:#1}}
\newcommand{\Propref}[1]{Property~\ref{prop:#1}}
\newcommand{\Obsref}[1]{Observation~\ref{obs:#1}}
\newcommand{\Asmref}[1]{Assumption~\ref{asm:#1}}

\newcommand{\Algoref}[1]{Algo~ \ref{algo:#1}}

\newcommand{\Apnref}[1]{Section~\ref{apn:#1}}
\newcommand{\Invref}[1]{Invariant~\ref{inv:#1}}



\newcommand{\theqed}{$\Box$}



\newcommand{\nsqed}{\hspace*{\fill} \theqed}

\renewcommand{\thefootnote}{\alph{footnote}}


\newcommand*{\affaddr}[1]{#1} 
\newcommand*{\affmark}[1][*]{\textsuperscript{#1}}

\def\Statenolinnum#1{{\def\alglinenumber##1{}\State #1}\addtocounter{ALG@line}{-1}}

\newcommand{\ignore}[1]{}

\newcommand{\para}[1]{\noindent\textbf{\itshape\large#1.}}
\newcommand{\myparagraph}[1]{\noindent\textbf{#1.}}
\algdef{SE}[DOWHILE]{Do}{doWhile}{\algorithmicdo}[1]{\algorithmicwhile\ #1}%
%
\newcommand{\sq}{\hbox{\rlap{$\sqcap$}$\sqcup$}}

\newcommand{\op} {operation\xspace}
\newcommand{\mth} {method\xspace}
\newcommand{\cc} {correctness-criterion\xspace}
\newcommand{\ccs} {correctness-criteria\xspace}
\newcommand{\gen}[1] {gen(#1)}
\newcommand{\term} {term\text{-}op\xspace}
\newcommand{\termop} {terminal operation\xspace}

\newcommand{\eevts}[1] {#1.evts}
\newcommand{\inv} {inv\xspace}
\newcommand{\rsp} {resp\xspace}

\newcommand{\stfdm} {starvation-freedom\xspace}
\newcommand{\stf} {starvation-free\xspace}
\newcommand{\cmth} {commit-throughput\xspace}

\newcommand{\sptree}{{\tt SP\text{-}tree}\xspace}
\newcommand{\vlist} {vertex-list\xspace}
\newcommand{\elist} {edge-list\xspace}
\newcommand{\elists} {edge-lists\xspace}

\newcommand{\helpe}{\textsc{HelpSearchEdge\xspace}}

\newcommand{\of}{obstruction-free\xspace}
\newcommand{\Nbk}{Non-blocking\xspace}
\newcommand{\nbk}{non-blocking\xspace}
\newcommand{\glist}{glist\xspace}
\newcommand{\vcs}{vertices\xspace}
\newcommand{\Vcs}{Vertices\xspace}
\newcommand{\cas}{compare-and-swap\xspace}
\newcommand{\CAS}{\texttt{CAS}\xspace}
\newcommand{\faa}{fetch-and-add\xspace}
\newcommand{\FAA}{\texttt{FAA}\xspace}
\newcommand{\pr} {PageRank\xspace}
\newcommand{\ppr} {Previous PageRank\xspace}
\newcommand{\concgraph}{ConcGraph~}
\newcommand{\ds}{data structure~}
\newcommand{\cds}{concurrent data structure~}
\newcommand{\tm}{Transactional Memory~}
\newcommand{\fg}{fine-grained~}
\newcommand{\cg}{coarse-grained~}
\newcommand{\br}{Barriers\xspace}
\newcommand{\bri}{Barriers-Identical\xspace}
\newcommand{\brh}{Barriers-Helper\xspace}
\newcommand{\brc}{Barriers-Chain\xspace}
\newcommand{\be}{Barriers-Edge\xspace}
\newcommand{\bei}{Barriers-Edge-Identical\xspace}
\newcommand{\bec}{Barriers-Edge-Chain\xspace}
\newcommand{\bo}{Barriers-Opt\xspace}
\newcommand{\boi}{Barriers-Opt-Identical\xspace}
\newcommand{\boc}{Barriers-Opt-Chain\xspace}
\newcommand{\beo}{Barriers-Edge-Opt\xspace}
\newcommand{\beoi}{Barriers-Edge-Opt-Identical\xspace}
\newcommand{\beoc}{Barriers-Edge-Opt-Chain\xspace}
\newcommand{\ns}{No-Sync\xspace}
\newcommand{\nsi}{No-Sync-Identical\xspace}
\newcommand{\nsc}{No-Sync-Chain\xspace}
\newcommand{\nse}{No-Sync-Edge\xspace}
\newcommand{\nsei}{No-Sync-Edge-Identical\xspace}
\newcommand{\nsec}{No-Sync-Edge-Chain\xspace}
\newcommand{\nso}{No-Sync-Opt\xspace}
\newcommand{\nsoi}{No-Sync-Opt-Identical\xspace}
\newcommand{\nsoc}{No-Sync-Opt-Chain\xspace}
\newcommand{\nseo}{No-Sync-Edge-Opt\xspace}
\newcommand{\nseoi}{No-Sync-Edge-Opt-Identical\xspace}
\newcommand{\nseoc}{No-Sync-Edge-Opt-Chain\xspace}
\newcommand{\cl}{contributionList\xspace}
\newcommand{\ol}{offsetList\xspace}
\newcommand{\lb}{lock-based~}
\newcommand{\lf}{No-Sync~}
\newcommand{\wf}{Wait-free\xspace}
\newcommand{\df}{deadlock-free\xspace}
\newcommand{\kk}{STICD~}
\newcommand{\kkIN}{STICD-IN~}
\newcommand{\kkCN}{STICD-CN~}
\newcommand{\lfkk}{No-Sync-STICD~}
\newcommand{\lfkkin}{No-Sync-STICD-IN~}
\newcommand{\lfkkcn}{No-Sync-STICD-CN~}
\newcommand{\brkkcn}{Barriers-STICD-CN~}
\newcommand{\brkkin}{Barriers-STICD-IN~}
\newcommand{\node}{node}
\newcommand{\Node}{Node}
\newcommand{\nodes}[1] {#1.nodes\xspace}
\newcommand{\head}{{\tt Head}\xspace}
\newcommand{\vhead}{VHead\xspace}
\newcommand{\ehead}{EHead\xspace}
\newcommand{\tail}{Tail\xspace}
\newcommand{\vtail}{VTail\xspace}
\newcommand{\etail}{ETail\xspace}
\newcommand{\bfsh}{BFSHead\xspace}
\newcommand{\bfst}{BFSTail\xspace}
\newcommand{\headn}{HeadNext\xspace}
\newcommand{\tailn}{TailNext\xspace}
\newcommand{\Init}{\textsc{Init\xspace}}
\newcommand{\hhead}{Head}
\newcommand{\htail}{Tail}
\newcommand{\habs} {AbS\xspace}
\newcommand{\hadd}{HoHAdd\xspace}
\newcommand{\hrem}{HoHRemove\xspace}
\newcommand{\hcon}{HoHContains\xspace}
\newcommand{\hloct}{HoHLocate\xspace}
\newcommand{\hvalid}{HoHValidate\xspace}

\newcommand{\lazy}{lazy-list\xspace}
\newcommand{\hoh}{hoh-locking-list\xspace}

\newcommand{\putv}{\textsc{PutV}\xspace}
\newcommand{\rmv}{\textsc{RemoveV}\xspace}
\newcommand{\getv}{\textsc{GetV}\xspace}
\newcommand{\pute}{\textsc{PutE}\xspace}
\newcommand{\rme}{\textsc{RemoveE}\xspace}
\newcommand{\gete}{\textsc{GetE}\xspace}
\newcommand{\Put}{\textsc{Put}\xspace}
\newcommand{\Get}{\textsc{Get}\xspace}

\newcommand{\sptclt}{\textsc{SPTClt}\xspace}
\newcommand{\bfstclt}{\textsc{BFSTClt}\xspace}
\newcommand{\dfstclt}{\textsc{DFSTClt}\xspace}
\newcommand{\scctclt}{\textsc{SCCTClt}\xspace}
\newcommand{\bctclt}{\textsc{BCTClt}\xspace}
\newcommand{\nftclt}{\textsc{NFTClt}\xspace}
\newcommand{\diametertclt}{\textsc{DiaTClt}\xspace}
\newcommand{\bfssptclt}{\textsc{SPBFSTreeCollect}\xspace}
\newcommand{\gclt}{\textsc{GraphCollect}\xspace}
\newcommand{\treel}{\textsc{TreeList}\xspace}

\newcommand{\checknegcycle}{\textsc{CheckNegCycle}\xspace}

\newcommand{\getsp}{\textsc{SSSP}\xspace}
\newcommand{\getbfs}{\textsc{BFS}\xspace}
\newcommand{\getdfs}{\textsc{GetDFS}\xspace}
\newcommand{\getscc}{\textsc{GetSCC}\xspace}
\newcommand{\getbc}{\textsc{BC}\xspace}
\newcommand{\getnf}{\textsc{GetNF}\xspace}
\newcommand{\getdia}{\textsc{GD}\xspace}

\newcommand{\spscan}{\textsc{SPScan}\xspace}
\newcommand{\dfsscan}{\textsc{DFSScan}\xspace}
\newcommand{\bfsscan}{\textsc{BFSScan}\xspace}
\newcommand{\sccscan}{\textsc{SCCScan}\xspace}
\newcommand{\bcscan}{\textsc{BCScan}\xspace}
\newcommand{\nfscan}{\textsc{NFScan}\xspace}
\newcommand{\diascan}{\textsc{DiaScan}\xspace}
\newcommand{\bfsspscan}{\textsc{SPBFSScan}\xspace}
\newcommand{\gscan}{\textsc{GraphScan}\xspace}

\newcommand{\csr}{CSR\xspace}

\newcommand \spnote[1] {\todo[inline,size=\footnotesize,color=yellow!20]{Sathya: #1}}

\newcommand \sknote[1] {\todo[inline,size=\footnotesize,color=yellow!20]{Sahith: #1}}

\newcommand \hemanote[1] {\todo[inline,size=\footnotesize,color=yellow!20]{Hema: #1}}

\title{An Improved and Optimized Practical Non-Blocking PageRank Algorithm for Massive Graphs\footnote{This work is already submitted in the 2021 29th Euromicro International Conference on Parallel, Distributed and Network-Based Processing (PDP).}}




\author{Hemalatha Eedi
        \and
        Sahith Karra          \and
        Sathya Peri           \and
        Neha Ranabothu        \and        
        Rahul Utkoor\thanks{All authors contributed equally and hence listed alphabetically.}                 
}

\authorrunning{Eedi et. al.} 

\institute{Hemalatha Eedi, \email{cs15resch11002@iith.ac.in} \\
           Sahith Karra, \email{sahith.karra@gmail.com} \\
           Sathya Peri, \email{sathya\_p@cse.iith.ac.in}  \\    
           Neha Ranabothu, \email{cs14btech11028@iith.ac.in} \\  
           Rahul Utkoor, \email{cs14btech11037@iith.ac.in} \\
}

\date{Received: date / Accepted: date}

\titlerunning{An Improved and Optimized Practical Non-Blocking PageRank Algorithm}

\maketitle
\begin{abstract}
PageRank is a well-known algorithm whose robustness helps set a standard benchmark when processing graphs and analytical problems. The PageRank algorithm serves as a standard for many graph analytics and a foundation for extracting graph features and predicting user ratings in recommendation systems. The PageRank algorithm iterates continuously, updating the ranks of the pages till convergence is achieved. Nevertheless, the implementation of the PageRank algorithm on large-scale graphs that on shared memory architecture utilizing fine-grained parallelism is a difficult task at hand. The experimental study and analysis of the Parallel PageRank kernel on large graphs and shared memory architectures using different programming models have been studied extensively. This paper presents the asynchronous execution of the \pr algorithm to leverage the computations on massive graphs, especially on shared memory architectures. We evaluate the performance of our proposed non-blocking algorithms for \pr computation on real-world and synthetic datasets using Posix Multithreaded Library on a 56 core Intel(R) Xeon processor. We observed that our asynchronous implementations achieve 10x to 30x speedup with respect to sequential runs and 5x to 10x improvements over synchronous variants.  


\keywords{PageRank \and Blocking Mechanism \and Non-Blocking Mechanism \and Barrier Synchronization \and Shared Memory Architecture \and Multi Threading }
\end{abstract}
\section{Introduction}
\label{intro}
Many practical problems in scientific computing, data analysis, and other thrust areas are modeled as graphs and solved with appropriate graph algorithms\cite{10.5555/301247}.
Most of the graphs are enormous and are scale to billions of nodes and edges while having uncommon and nuanced structures. Processing graphs and graph applications tends to be a performance issue, specifically in shared memory architectures. It is also essential to leverage the existence and interpretation of these large graphs by adding specific metrics for deriving useful analytics on many of these large graphs. PageRank is such a metric that can be used to determine the importance of nodes or pages in a web graph. Page et al. \cite{ilprints422} devised this algorithm for Google Search Engine.  The \pr computation proceeds iteratively to estimate the significance of a web page. In each iteration, we calculate the importance of a page by randomly selecting a page and picking a random link at uniform probability \textit{d} to visit another page. This process continues by updating the rank of a particular page.  Pages with more links are more likely to be visited, so they eventually have higher ranks.  If the outgoing link is not available, then the process moves to a new page with probability \textit{(1-d)} and restarts the process from this page. 

The algorithm's fundamental premise is that a page's rank is determined by its inbound link. Pages with more links are more likely to be visited, resulting in higher rankings.  \cite{ilprints422}. The rank \textit{pr} of node $u$ in Graph G is formally defined as:

\begin{equation}
pr(u)=\frac{1-d}{n}+\ d* \sum_{(v,u) \in E} {\frac{pr(v)}{q}}
\end{equation}
where, \textit{n} = number of pages, \textit{q} = out\-degree defining the number of hyperlinks on page \textit{v} and \textit{d} is the dampening parameter initialized to 0.85.

Parallel implementations of \pr algorithm have been extensively studied on various architectures. As \pr algorithm iteratively progresses, multiple threads coordinate easily using synchronous mechanisms. Synchronization can be applied for both vertex-centric and edge-centric computations and on shared-memory and distributed memory architectures\cite{37252}. The barrier\-synchronization mechanism is more suitable for iterative algorithms such as the \pr algorithm.  However,  synchronous computations utilize Thread-Level Parallelism which leads to drawbacks in dealing with progress conditions in the occurrence of thread failures. On the other hand, in asynchronous computations, progress is guaranteed where threads do not have to wait for slower threads or failure threads. This criterion motivates us to apply asynchronous computations on shared memory architecture for vertex-centric, edge-centric, or graph-centric algorithm implementations. The algorithm implementations relied on processing and computing vertices, in a Vertex-centric model\cite{37252}. Edges are the key computational units in an Edge-centric model\cite{10.1145/2517349.2522740}. In a Graph-centric model, the computations are performed on sub graphs with implicit compiler optimizations. \cite{10.1145/2150976.2151013}

In this paper, we present approximation techniques for our earlier proposed non-blocking methods to leverage the computation of \pr algorithm on massive graphs, especially on shared memory architectures. Our main focus is on designing an asynchronous \pr algorithm with no synchronous limitations that can be applied to vertex-based and edge-based representations. We examined that applying asynchronous computations using the No-Sync variant on the \pr algorithm can speed up performance over synchronous methods. 
The \textit{Loop-Perforation} is an approximate technique that skips some iterations of a loop to increase the computational speed-up. The primary idea of the loop perforation approximate technique is to reduce the amount of computation performed within each iteration as the algorithm makes progress\cite{10.1145/2025113.2025133}\cite{DBLP:conf/hipc/PanyalaSHKCK17}.  
\textit{Loop-Fusion} is an optimization technique that unites two or more independent loops into a single loop and is applied only when data dependencies are preserved. Loop fusion technique when applied increases data locality and the level of parallelism and decreases the overhead of loop control. 
In this direction, we applied loop perforation and loop perforation approximate technique and enabled loop fusion optimization technique to compute the \pr algorithm.

\subsection{Our Contributions}
Our contributions in this work are

\begin{itemize}
    \item [$\bullet$] Developed vertex-centric Non-Blocking(\lf and \wf) PageRank algorithms and evaluated the implementations on real-world and synthetic datasets. \cite{DBLP:conf/pdp/EediPRU21}
    \item [$\bullet$] Analysis of vertex-centric and edge-centric computations on PageRank Algorithm. 
    \item [$\bullet$] Applied and analysed  approximation techniques on synchronous and asynchronous algorithms using both vertix-centric and edge-centric computations on PageRank Algorithm. 
    \item [$\bullet$] Testify the performance improvements of asynchronous variants with 5x to 10x speedup when compared with synchronous variants.
\end{itemize} 
\section{Background and Motivation}
This section presents a description for designing shared data objects using different synchronization techniques
\cite{DBLP:journals/sigsoft/Vu11} for designing shared data objects and algorithms proposed in this paper. Also, we discuss the primary motivation to design an asynchronous technique for the iterative \pr algorithm. In a 
\textbf{Shared Memory Multiprocessors or Multicore systems} multiple processors or processing elements need to coordinate accesses using shared memory. Programming implementations on shared memory systems is challenging as multiple processes simultaneously access shared resources due to lack of coordination, resulting in unpredictable delays and performance bottlenecks. 
An efficient synchronization mechanism is apt to deal with these issues in parallel computation by multiple processes. Two classes of synchronization approaches deal with multiple processes. 

a) According to the \textbf{Blocking synchronization}  mechanism, a shared object can only be accessed by one thread at a time. It locks the other threads and only allows one thread at a time to access the shared object, avoiding conflicts. However, it causes waiting and deadlocks.

b) The \textbf{Non-Blocking synchronization} mechanism is intended to prevent the issues associated with blocking synchronization. \textit{lock-free} and \textit{\wf} methods are used to avoid thread conflicts. The \textit{\wf} technique guarantees that every thread is processed in a finite number of steps, whereas the \textit{lock-free} method guarantees that at least some threads are processed in a finite number of steps. 
The \textit{\wf} method is implemented using the Compare-and-Swap atomic primitive\cite{DBLP:journals/sigsoft/Vu11}. 

    


\subsection{Motivation}

The research primarily pertains to the graph pre-processing step considering their structural properties and their basic representations \cite{DBLP:conf/ppopp/ShunB13}\cite{DBLP:conf/hipc/PanyalaSHKCK17} \cite{DBLP:conf/icdcn/GargK16}, load distributions on the threads \cite{DBLP:journals/topc/LakhotiaKPP20}, Etc. These algorithms mainly involve the use of the Barrier synchronization concept, but this has many drawbacks. The barrier synchronization applied on each thread at every iteration may cause indefinite blockages without any progress, which stands as a significant issue. Apart from this, parallel algorithmic designs pose many challenges concerning performance and memory. In the context of this, the motive of the algorithm proposed in this paper is to enable independent execution of threads avoiding barriers resulting in increased computational speeds.
Concurrent thread executions are one solution to the parallel algorithm design issues and can help adequately utilize the multi-core architectures of the day. The iterative algorithms use computations of previous iterations for calculating values in the current iterations. When applied to large-scale graphs, Barrier synchronization consumes substantial time and memory resources to yield the desired solution. As is known, the scalability is limited in parallel shared systems due to memory latencies and synchronization problems. 

 It is proposed to parallelize the PageRank algorithm on shared memory systems by improving synchronization aiming fine-grained parallelism using optimistic concurrency control mechanisms and ultimately devise a mechanism to prove the correctness of a parallel algorithm is a challenging task. 

Non-blocking algorithms with lock-free and wait-free properties address these challenges on the PageRank algorithm using piece-wise concurrent programming, removing barrier constraints and iteration dependency from the iterative algorithms considering Graph optimizations as well.

\subsection{System Model for implementing PageRank Algorithm}

One assumption in this paper is that the system has a limited set of \textit{p} threads running on the multiprocessors. The communication between these asynchronous threads happens using shared objects. Atomic primitives such as CAS(Compare-And-Swap) are used to overcome the hurdles encountered during thread communications. To implement a wait-free algorithm, we rely on the CPP vector template library to guarantee thread safety on the lock-free algorithm.


\section{Related Work}
This section presents an overview of the literature related to the \pr computation from its origin to recent advances. 

Google’s first algorithm proposed by Page et. al \cite{ilprints422} is an iterative algorithm that would rank websites in their search engine results. This concept eventually became the \pr algorithm we know today. The algorithm ranks web pages iteratively until their page rank value converges within a given threshold. These days, the \pr algorithm is a key metric in determining the importance of a given web page. Due to this, its popularity has only grown over the past few decades. The research interest in the \pr algorithm has been prominent in recent history. The parallel computation of the algorithm on shared memory architectures has been the subject of great discussion, with many proposing different programming models for its computation \cite{10.5555/301247},\cite{DBLP:conf/ipps/BarrettBMW09},\cite{DBLP:conf/ppopp/ShunB13},\cite{DBLP:conf/icdcn/GargK16}. 


Berry et al.\cite{DBLP:conf/ipps/BarrettBMW09} proposed a parallel PageRank algorithm in their library known as the Multithreaded Graph Library (MTGL). The algorithm executes on Cray XMT a Multithreaded Scalable Architecture with 128 threads and uses the QThreads API to help process the threads and enable synchronization between them. Each thread computes the PageRank value of a node by accumulating the number of votes of its incoming edges. This implementation had performance bottlenecks due to a lack of optimizations with QThread API. 


GraphLab proposed by Low et al. in their paper \cite{DBLP:conf/uai/LowGKBGH10} is a framework designed to achieve parallelism for Machine learning algorithms. It provides a shared-memory implementation that is efficient but aims to further it to distributed systems and has been used on popular algorithms for testing. The framework has been implemented in C++ with the help of PThreads. The PageRank computations in this framework use synchronization locks and barriers in each iteration and compute the PageRank values using schedulers and assertively tuning needed parameters.  


The authors in their paper \cite{DBLP:conf/cidr/WangXDG13} proposed a new programming paradigm – GRACE to facilitate easy programmability along with synchronous and asynchronous executions. The model is built to capture the data dependencies through message passing with faster convergence. The experiments performed on the model show synchronous execution with performance as high as asynchronous executions. The framework provides an iterative synchronous programming approach for developers. A group of worker threads coordinates with a driver thread to calculate the PageRank of the scheduled vertices using Barriers. 

The authors of \cite{DBLP:conf/sosp/NguyenLP13}, discussed graph mining algorithms with a primary focus on the Page Rank Algorithm. The paper aims to develop a framework for designing scalable data-driven algorithms for graph mining algorithms through a case study on the \pr algorithm. The paper investigates various implementations of the page rank algorithm in the purview of three design axes – work activation, data access pattern, and scheduling criteria to test and understand how various design choices affect the performance. The results showed that considering data-driven designs, which are also scalable over iterative algorithms, improves performance. The results specifically showed that the data-driven, push-back algorithmic implementations had increased the performance by 28x. 


Shun and Blelloch proposed Ligra - a Lightweight Framework to Process Large Graphs \cite{DBLP:conf/ppopp/ShunB13} is built for shared memory architecture and encompasses both vertex-centric and edge-centric models. As the graphs are stored in the memory, this framework optimizes the computation on shared memory. This framework supports two data types: Graph(V, E), which stores the graph, and another vertexSubset, a subset of |V|. The framework has two functions VERTEXMAP( ) function is used for mapping over the vertices, and EDGEMAP( ) is used for mapping over the edges. Taking advantage of  Frontier Based Computation, an active set of nodes and edges updates dynamically. Ligra uses Cilk Plus parallel codes to achieve parallelism.


Paritosh Garg and Kishore Kothapally in their paper titled STIC-D: Algorithmic Techniques for Efficient Parallel Pagerank Computation on real-world graphs\cite{DBLP:conf/icdcn/GargK16}, presents four techniques that optimize the large-scaled graphs enough to compute the PageRank algorithm. The first method identifies the Strongly Connected Components(SCCs), performing topological sorting on them. The PageRank is then computed on smaller subgraphs that are strongly connected and processed in a specific order. The second method uses the law that identical nodes have identical PageRank. If two nodes have incoming edges from the same set, then the two nodes would have the same PageRank. So, nodes can be classified, and PageRank is calculated on one vertex from each class. All the other vertices will have the same PageRank, thereby eliminating the redundant computation of similar nodes. If a set of nodes form a chain, each node has only one incoming edge, and one outgoing edge, the PageRank of a vertex with such a node is easy to compute. In the fourth method, if the PageRank does not change in the previous few iterations, we can mark it as a dead node and not include them in the successive iterations. However, the proposed preprocessing methods applied in this paper have not yet been parallelized and require performance enhancements. 


This paper \cite{DBLP:conf/ipps/BeamerAP17} deals with providing a technique called Propagation Blocking with which the memory communication is reduced. Usually, graphs are sparse, and hence for processing to compute the PageRank, it would take a lot of memory communication. Propagations refer to transfers of vertex values. In the technique proposed, the propagations are stored in memory in a semi-sorted memory. Propagation blocking happens in two phases - binning and accumulate. Binning phase is like bucket sort, where the vertices are divided into different continuous bins. A contribution is processed and is added to the corresponding bin, and the destination is appended to the contribution. In the second phase, accumulate, each bin is processed. Each contribution, destination pair is processed, and the contribution is added to the destination. This approach will reduce the memory communication as the contributions are binned; hence temporally close vertices have more probability of being in a bin. The cache misses will be less, corresponding to less memory communication. Hence his technique improves spatial locality on DRAM by limiting the bounds of memory communication. However, this technique requires an edge-centric representation as input and is bounded by barrier synchronization.

Hamza Omar et al. in the paper  \cite{DBLP:conf/iccd/OmarAK17}, perform a study on the impact of input dependence for graph algorithms in the context of approximate computing. The authors justify that using perforation on the input graphs over the algorithms improves performance.  Additionally, they proposed a predictor algorithm that helps in reducing the challenges in input dependencies of loop perforation for graphs and enables a satisfactory accuracy level. Experiments were tested using CPU and GPU architectures such as Nvidia, Intel CPU architectures- 8 core Xeon and 61-core Xeon Phi. The results have exhibited a 30\% improvement of performance on using perforation in input graphs and this, when applied to the Nvidia architecture, showed an increase of 19\% of power utilization.

In \cite{DBLP:conf/hipc/PanyalaSHKCK17}, the authors aimed to design approximation techniques for computing, enabling good performance coupled with lesser loss of accuracy. The main techniques proposed are loop perforation, vertex/edge ordering, threshold scaling, and other heuristics such as data caching, graph coloring, Etc, which are implemented and tested on the two graph algorithms, i.e., PageRank and community detection. The paper shows the performance improvement of the PageRank algorithm by 83\% and up to 450x for community detection with low influence on the accuracy of using the approximation techniques on the iterative techniques. The authors conclude that approximation techniques will provide good performance with lesser loss of correctness and optimality of solutions. These techniques led to a 7-10 times improvement in performance compared to the efficient algorithms \kk \cite{DBLP:conf/icdcn/GargK16}. Nevertheless, the approximate PageRank computation algorithm utilizes additional storage to save the sorted edge-list for computing the PageRank of the targeted vertex. Barriers synchronization technique is still used for the parallel implementation. 

Graphphi - a framework proposed by \cite{DBLP:conf/IEEEpact/PengPWBR18} with four major components - preprocessing, graph processing model, the MIMD/ SIMD aware scheduler, and extra optimizations. This framework can process edge and vertex-centric graphs, which means this is a hybrid of edge-centric and vertex-centric. This framework works efficiently with MIMD and SIMD with thread-level load balance. The implementation starts with preprocessing graphs into hierarchical blocks. Then the edges are processed, and then these are executed in a load-free, cache-aware, load-balanced, and SIMD-efficient manner. Another optimization layer is provided by push and pull execution and by taking advantage of the High Bandwidth Memory technique. Overall, the framework speed-up by 4X to 35X.   

An optimized shared-memory graph processing framework introduce by \cite{DBLP:journals/topc/LakhotiaKPP20} increases cache and memory efficiency. This framework is called GPOP (Graph Processing Over Partitions) framework, which promises to increase the efficiency by executing the graph algorithms at lower granularities called partitions. This framework is compared against Ligra, GraphMat, and Galois on different graph algorithms using large datasets to check the efficiency. In comparing the frameworks, GPOP shows fewer cache misses than the other frameworks and increases the performance, which is almost 19x faster than Ligra, 9.3x - GraphMat, and 3.6x – Galois, respectively. 



\section{Description of Algorithms}

This section explains the parallel \pr computation using Blocking and Non-Blocking algorithms on large-scale graphs. 
Implementation of iterative parallel graph algorithms takes into consideration the following factors, like convergence, performance. We rely on convergence factor at three different levels for our algorithm: Node-level, Algorithm-level, and Thread-level. In node-level convergence, the termination of the \pr algorithm depends on the convergence of each node independently. In algorithm-level convergence, the termination of the \pr algorithm depends on all nodes from all partitions. In thread-level convergence, each partition terminates independently. In the below algorithms, {\br, \be, \brh} fall under the algorithm-level convergence category, whereas {\ns, \nse} fall under thread-level convergence category. \bo falls under a combination of node-level and algorithm-level convergence categories. \nso fall under a combination of node-level and thread-level convergence categories.

\vspace{-0.6cm}
\subsection{Barrier Algorithm}

The \br algorithm explained here is the baseline version discussed in paper\cite{DBLP:conf/icdcn/GargK16}. Given a graph \textit{G = (V, E)}, vertices are divided into \textit{p} equal-sized partitions. Each thread is responsible for the computation of one partition. We employed a static load allocation technique to assign nodes to partitions. Lines 4 - 9, to begin with, initializes all the variables and the arrays. This algorithm uses two arrays for storing \pr values. The \textit{prev\_pr} array holds the \pr values from the previous iteration, and the \textit{pr} array stores the current iteration \pr values. The error variable helps us decide if the iteration should either continue or converge. This error value is the difference between the \ppr and the \pr for each vertex. The threshold is a constant value initialized to ${10^{-16}}$, which determines the termination condition of the algorithm.

In this algorithm the computation is divided into two phases. Lines 12-18 are the first phase of the algorithm that is responsible for \pr computations. The algorithm computes the maximum absolute difference between the \ppr and the \pr values and saves the value in the thrErr array. Lines 20-22 are responsible for updating the shared variables. After computing the \pr values in the current iteration, the algorithm proceeds to the next phase: to copy the values from the \textit{pr} array to \textit{prev\_pr} array and calculate the global error value.

\begin{algorithm}[H]
\caption{Barrier Baseline Algorithm}
\scriptsize
\begin{algorithmic}[1]
\setlength{\lineskip}{3pt}
\State \textbf{Input:} \textit{q} $\leftarrow$ Number of threads
\State  \hspace{0.65cm}  Graph {G} $\leftarrow$ (V, E) \Comment{CSR representation}
\Procedure{ParallelPageRank}{$G = (V, E), q$}      
    \State  $ err \leftarrow 1 $ \Comment{Initializing Global Error}
    \State  $ threshold \leftarrow \displaystyle{10^{-16}}$
    \ForAll{ nodes $u_{i}$ $\vert i \in  \lbrace 1, ...,\textit{n}\rbrace$} 
        \State $pr(u_{i})\leftarrow 0$
        \State $prPrev(u_{i})\leftarrow \frac{1}{n}$
    \EndFor
    \While{$err > \  threshold$}
        \State $threadErr[q]  \leftarrow 0$ \Comment{Initialize thread's error}
        \ForAll{ $u \in threadVertices(T_{i}) $ }
            \State $pr(u)\leftarrow \displaystyle\frac{(1-d)}{n} $
            \ForAll{ \textit{u} $\in$ \textit{V}  such  that (v,u) $\in$ E}
                \hspace{2cm} \State $pr(u) = pr(u)+ \displaystyle\frac{prPrev(v)}{outDeg(v)}*d$ 
            \EndFor
            \hspace{1.5cm} \State$threadErr[T_{i}]=max(threadErr[T_{i}],\left| prPrev(u)-pr(u) \right|)$
        \EndFor
        \State Barrier Sync Checkpoint
        \Comment{Phase - I}
            \ForAll{ threads $T_{i}$ $\vert i \in  \lbrace 1, ...,\textit{q}\rbrace$} \Comment{Updating Global Error}
                \hspace{1.5cm} \State $err=max(err,threadErr[T_{i}])$  \hspace{1.5cm} 
            \EndFor
            \hspace{1cm} \State $prPrev=pr$
        \State Barrier Sync Checkpoint
        \Comment{Phase - II}
  \EndWhile
\EndProcedure

\end{algorithmic}
\label{algorithm:Barriers}
\end{algorithm}
\subsection{Barrier\_Edge Algorithm} 

\be is the baseline algorithm proposed in \cite{DBLP:conf/hipc/PanyalaSHKCK17} paper. In this approach, the author has developed a three-phase \pr algorithm in which the second and third phases are similar to the \br Algorithm . In the second phase, instead of computing the contribution values of the incoming neighbors, we directly fetch the values from the ContributionList vector. In the first phase, each node traverses through its outLinks and populates the contribution value to its respective outgoing neighbor. Similar to the Barrier Algorithm, each phase ends with barrier instruction.

\begin{algorithm}[H]
\scriptsize
\caption{Barrier-Edge Algorithm}
\begin{algorithmic}[1]
\Procedure{Main}{Graph G = (V,E)}
\State G = ConvertCsr(V,E)
\State Initialize Variables
\While{$error > threshold$}
    \ForAll{$u \in Thread\_vertices$}
        \If{$outdeg(u) == 0$}
            \State continue
        \EndIf
        \State $contribution = \displaystyle\frac{prev\_pr(u)}{outdeg(u)}$
        \For{$v \in range(outlinks[u].start, outlinks[u].end)$}
        \Comment{outlinks to u}
            \State $contributionList(offsetList(v)) \leftarrow contribution$
        \EndFor
    \EndFor
    \State barrier()
    \Comment{Phase - I}
    \State $threadError[*thdid] \leftarrow 0$
    \ForAll{$u \in Thread\_vertices$}
        \State $sum \leftarrow 0$
        \For{$v \in range(inlinks[u].start, inlinks[u].end)$}
        \Comment{inlinks to u}
            \State $sum = sum + contributionList(v)$
        \EndFor
        \State $page\_rank(u) = \displaystyle\frac{(1-d)}{n} + (d*sum)$
        \State $thd\_error[thd\_id] = max(thd\_error[thd\_id], prev - page_rank(u))$
    \EndFor
    \State barrier()
    \Comment{Phase - II}
    \If{$thdid == 0$}
        \State $error \leftarrow 0$
		\For{$id \in range (0, num\_of\_threads)$}
				\State $error \leftarrow max(error, thd\_error[id])$
        \EndFor
        \State $prprev = pr$
    \EndIf
	\State barrier()
	\Comment{Phase - III}
\EndWhile
\EndProcedure
\end{algorithmic}
\label{algorithm:Barriers-Edge}
\end{algorithm}

\subsection{No\_Sync}
In our work, we proposed an asynchronous algorithm (\ns) for vertex-centric \pr computations\cite{DBLP:conf/pdp/EediPRU21}. 
At a minimum, one thread must make progress by computing and updating the PageRank values.Multiple threads can access the same element in this approach, but only one thread will be responsible for writing to the memory. In this process, we can encounter read-write conflicts but not write-write conflicts. A thread can read the previous \pr value (or the one computed in the current iteration) in such a scenario. C++ vector templates guarantee this thread-safety property.
\url{https://en.cppreference.com/w/cpp/container}

We modified Algorithm \ref{algorithm:Barriers} to make it an asynchronous algorithm. The most notable change is to eliminate \br from the computation at the end of each phase. This change will allow threads to progress in the next iteration without waiting for other threads to complete their task. The subsequent change reduces the memory usage by eliminating the \ppr array. Since we are eliminating the iteration level dependency with our first change, we can apply our second change to Algorithm \ref{algorithm:Barriers}.

Along with the \pr computation, each thread will compute the error value locally. In the synchronous setting, each thread will update the local error value to the global value in the second phase of computation. A thread will update its local error value and partially computed error values from other threads and enter the next iteration in an asynchronous algorithm. This property allows us to have thread-level convergence irrespective of the mode of load allocation.

\begin{algorithm}[H]
\scriptsize
\caption{No-Sync}
\begin{algorithmic}[1]
\setlength{\lineskip}{2pt}
\Procedure{Main}{Graph G = (V,E)}
    \State G = ConvertCsr(V,E)
    \State Variables Initialization
    \While{$localErr > thresholdValue$}
        \State $threadLocalErr \leftarrow 0$
        \ForAll{$ u \in threadVertices(T_{i})$}
            \State $tmp = \displaystyle\frac{(1-d)}{n} $
            \State $previous \leftarrow pr(u)$
            \ForAll{$v \in  V$ such that $(v,u) \in E $}
                \State $tmp = tmp + \displaystyle\frac{pr(v)}{outDegree(v)}  * d $
            \EndFor
            \State $pr(u) \leftarrow tmp$
            \State $threadLocalErr = max(threadLocalErr, \left| tmp - previous \right|) $
        \EndFor
        \State $iterations[*thdid]++$
        \State $threadErr[T_{i}] \leftarrow threadLocalErr $
        \ForAll{$tid \in threads (1,p)$}
            \State $localErr = max(localErr,threadErr[tid]) $
        \EndFor
    \EndWhile
\EndProcedure
\end{algorithmic}
\label{algorithm:No-Sync}
\end{algorithm}

\maketitle

\begin{lemma}
The algorithm eventually terminates in finite steps.
\end{lemma}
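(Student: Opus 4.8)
The plan is to recognise the update in the inner loop of Algorithm~\ref{algorithm:No-Sync} as one step of a fixed-point iteration and to exploit the contraction that the damping factor $d$ supplies. Writing the vector of ranks as $x$, the rule $pr(u)\leftarrow\frac{1-d}{n}+d\sum_{(v,u)\in E}\frac{pr(v)}{outDegree(v)}$ is exactly $x\mapsto T(x)=c+dMx$, where $c=\frac{1-d}{n}\mathbf{1}$ is constant and $M$ is the column-(sub)stochastic link matrix with $M_{uv}=1/outDegree(v)$ whenever $(v,u)\in E$ (dangling nodes simply contribute an empty column). First I would record the two structural facts this gives: $T$ has a unique fixed point $x^{*}$, and $T$ is a contraction, since for any $x,y$ we have $\|T(x)-T(y)\|_{1}=d\,\|M(x-y)\|_{1}\le d\,\|x-y\|_{1}$ because every column of $M$ sums to at most $1$. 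Equivalently, the nonnegative iteration matrix $dM$ has spectral radius $d\,\rho(M)\le d<1$.

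Next I would confront the fact that the threads never synchronise, so a thread may read for its neighbours either the previous value of $pr(v)$ or the value just written in the current sweep — the read--write conflicts noted in the description of \ns. This is precisely the setting of totally asynchronous fixed-point iteration: each thread owns a disjoint block of coordinates (its partition), every still-active thread updates its block infinitely often until termination, and every read eventually reflects an arbitrarily recent write. Because $dM$ is nonnegative with $\rho(dM)<1$, the Chazan--Miranker / Bertsekas--Tsitsiklis asynchronous convergence theorem applies: from any bounded starting configuration the asynchronous iterates converge to the same $x^{*}$ to which the synchronous power iteration converges. Boundedness itself is immediate, because each coordinate is reset to $\frac{1-d}{n}$ plus a nonnegative weighted sum of current ranks, so the iterates stay within a fixed compact box.

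With convergence $x\to x^{*}$ established, the quantity computed as the thread-local error, namely $\max_{u}\lvert tmp-previous\rvert$ over a thread's block, is the max-norm of the change produced by one sweep and therefore tends to $0$. Since the threshold $10^{-16}$ is a fixed positive constant, there is a finite time after which every $threadErr[T_{i}]$ lies below it; from that point on the maximum each thread reads into its $localErr$ across all threads is also below the threshold, the guard $localErr>thresholdValue$ fails, and every thread leaves its while-loop after finitely many further iterations. Hence the algorithm terminates in finitely many steps.

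The step I expect to be the main obstacle is the asynchronous one: turning the informal ``reads may be stale'' statement into hypotheses strong enough to invoke an asynchronous-convergence theorem. Concretely I must (i) argue a fairness/progress condition — no active thread is starved, so each block is refreshed infinitely often while the loop runs — and (ii) pass from the natural $\ell_{1}$-contraction of $T$ to a componentwise (weighted max-norm) contraction, since asynchronous updates act coordinate by coordinate; this is where the nonnegativity of $M$ and the Perron bound $\rho(dM)=d<1$ do the real work. A secondary care-point is the error bookkeeping: because the termination test aggregates per-thread errors communicated without a barrier, I would verify that once all per-thread errors are permanently below the threshold no stale read of $threadErr$ can revive any thread's loop.
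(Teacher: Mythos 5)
Your proposal is correct in outline, but it takes a genuinely different route from the paper's own proof. The paper argues elementarily: it models the asynchronous state as a mixture of (at most) two consecutive synchronous iterations, defining $pr_{i-1:i}^{u}$ (Eq.~(4)) whose incoming contributions come partly from iteration $i-1$ and partly from iteration $i-2$, then claims the sandwich $pr_{i-1}^{u}\leq pr_{i-1:i}^{u}\leq pr_{i}^{u}$ (under the hypothesis $pr_{i-1}^{u}\leq pr_{i}^{u}$), deduces $err_{i-1:i}^{u}\leq err_{i}^{u}$, and concludes termination from the assertion that the synchronous per-vertex error ``is always expected to decrease'' each iteration. Your route --- casting the update as the affine map $T(x)=c+dMx$ with $\|dM\|_{1}\leq d<1$ and invoking the Chazan--Miranker / Bertsekas--Tsitsiklis totally asynchronous convergence theorem under a fairness assumption --- buys substantially more rigor and generality: it handles arbitrary staleness rather than only the ``base case'' of threads one iteration apart (which is all the paper formally treats); it needs no componentwise monotonicity hypothesis (the paper's $pr_{i-1}^{u}\leq pr_{i}^{u}$ condition need not hold, since individual PageRank values can oscillate between iterations); and it replaces the unproved claim of monotone per-vertex error decrease with a genuine geometric contraction, the only property that is actually true in general (the error decreases in norm, not componentwise). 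What the paper's approach buys in exchange is brevity and self-containedness: no external asynchronous fixed-point theorem, no weighted max-norm construction, no explicit fairness axioms. The two obstacles you flag are exactly the right ones, and your attention to the termination bookkeeping --- in particular that a thread exiting early freezes its coordinates, leaving a smaller problem that is still a contraction, so the remaining threads' errors still vanish --- addresses the thread-level convergence that the paper emphasizes in its algorithm description but never confronts in its proof.
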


\begin{proof}
The fundamental concept behind this algorithm is that it will terminate when the threshold value is greater than the error value of all the threads. Therefore, it has to be proved that the error value of the threads decreases with each iteration. However, since the maximum error value of the vertices allocated to a thread is the error value of the thread, the statement can be rewritten as follows: it must be proven that in each iteration, the error value of each vertex decreases. The error value in the  \textit{i\textsuperscript{th}} iteration of a vertex \textit{u} and \pr in the baseline algorithm is given by Eq(2) and Eq(3) respectively.


\begin{equation}
    pr_{i}^{u} = \frac{1-d}{n} + \ d*\sum_{(v,u)\in E}^{}\frac{pr_{i-1}^{v}}{outDegree(v)} 
\end{equation}

\begin{equation}
     err_{i}^{u} = \left | pr_{i}^{u} - pr_{i-1}^{u} \right |
\end{equation}

By the definition of our proposed \lf  algorithm, threads simultaneously compute in different iterations. The PageRank of any vertex can be part of any iteration at any given point of time, be in the first or the \textit{n\textsuperscript{th}} iteration. For the base case scenario, threads can be part of two consecutive iterations at a given point of time. 


Eq(2) can be modified to Eq(4) considering that the threads are present in \textit{i\textsuperscript{th}} and \textit{(i-1)\textsuperscript{th}} iterations. Let S\textsubscript{i}\textsuperscript{u} be a set of vertices where (v,u) $\in$ E and PageRank of v is from \textit{i\textsuperscript{th}} iteration. 

\begin{equation}
     pr_{i-1:i}^{u} = \frac{1-d}{n} + \ d*\sum_{v\in S_{i}^{u}}^{}\frac{pr_{i-1}^{v}}{outDegree(v)} + \ d*\sum_{v\in S_{i-1}^{u}}^{}\frac{pr_{i-2}^{v}}{outDegree(v)}
\end{equation}
Error in Eq(3) can also be modified accordingly.
\begin{equation}
    err_{i-1:i}^{u} = \left | pr_{i-1:i}^{u} - pr_{i-1}^{u} \right |
\end{equation}
At any given instant $pr_{i-1}^{u} \leq  pr_{i-1:i}^{u} \leq  pr_{i}^{u}$ if $pr_{i-1}^{u} \leq pr_{i}^{u}$ which means $pr_{i-1:i}^{u}$ always lies between $pr_{i}^{u}$ and $pr_{i-1}^{u}$.
\begin{equation}
    \left | pr_{i-1:i}^{u} - pr_{i-1}^{u} \right | \leq \left | pr_{i}^{u} - pr_{i-1}^{u} \right | \Rightarrow err_{i-1:i}^{u} \leq err_{i}^{u}
\end{equation}
$err_{i}^{u}$ from the base algorithm is always expected to decrease in every iteration, so $err_{i-1:i}^{u}$ also decreases with every iteration. 
\end{proof}

\begin{lemma}
No-Sync algorithm fetches identical results to that of Sequential.
\end{lemma}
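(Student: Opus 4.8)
The plan is to prove the result through the uniqueness of the \pr fixed point, using as the driving mechanism the squeezing estimate already obtained in the preceding lemma. First I would recast the sequential \pr iteration in operator form $T(\mathbf{x}) = \mathbf{c} + M\mathbf{x}$, where $\mathbf{c}$ is the constant vector with every entry equal to $\frac{1-d}{n}$ and $M$ is the nonnegative matrix whose $(u,v)$ entry is $d/outDegree(v)$ when $(v,u)\in E$ and $0$ otherwise. Because the column of $M$ belonging to a vertex $v$ holds exactly $outDegree(v)$ nonzero entries, each equal to $d/outDegree(v)$, its column sum is $d$, so $\|M\|_1 = d < 1$ and $T$ is a contraction in the $\ell_1$-norm. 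By the Banach fixed-point theorem $T$ has a unique fixed point $\mathbf{pr}^{*}$, and the sequential Jacobi-style iteration of the baseline algorithm converges to $\mathbf{pr}^{*}$ from any valid initial distribution.

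Next I would model one step of the \ns algorithm as an asynchronous update: when a thread recomputes $pr(u)$, each neighbour value $pr(v)$ it reads is some earlier iterate, so the vector it produces is precisely the mixed iterate $pr_{i-1:i}^{u}$ analysed in the previous lemma. The fact carried over from that lemma is the squeezing relation $pr_{i-1}^{u} \le pr_{i-1:i}^{u} \le pr_{i}^{u}$, which places every asynchronous value between two consecutive sequential iterates. I would then argue that the contraction property forces the gap between the bracketing sequential iterates to shrink geometrically, so the asynchronous iterates are squeezed onto the same limit $\mathbf{pr}^{*}$ reached by the sequential run.

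It remains to convert convergence-to-the-same-limit into the stated \emph{identical results}. Since the preceding lemma guarantees termination, both algorithms halt once every per-vertex error drops below the fixed threshold $10^{-16}$; because both sequences converge to the same $\mathbf{pr}^{*}$ and the threshold is applied to the same quantity $|pr_i^u - pr_{i-1}^u|$, the two algorithms report values agreeing with $\mathbf{pr}^{*}$, and hence with each other, to within that tolerance. I would close by noting that the damping factor $d<1$ is exactly what makes every column sum strictly less than one, the single hypothesis driving both the uniqueness of $\mathbf{pr}^{*}$ and the convergence of the asynchronous scheme.

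I expect the main obstacle to be the middle step: rigorously justifying that the genuinely asynchronous reads of \ns converge to the sequential fixed point rather than merely staying bracketed. A clean way to discharge this is to invoke the classical theory of chaotic relaxation for contractive linear iterations (Chazan--Miranker / Baudet), whose hypothesis $\rho(|M|) \le \|M\|_1 = d < 1$ is met here; the only additional ingredient required is that every component is updated infinitely often and stale values are eventually overwritten, both of which follow from the termination argument of the preceding lemma. Care must be taken that the two-consecutive-iteration squeezing used there actually extends to unboundedly many simultaneously-live iterations, which is exactly where the asynchronous-iteration machinery, rather than a direct induction, does the heavy lifting.
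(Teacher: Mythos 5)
Your proposal is sound, but it takes a genuinely different route from the paper's proof. The paper stays at the level of the defining equations: it writes the asynchronous value of a vertex as a mixed iterate whose in-neighbour contributions may come from any of iterations $1$ through $I$ (its Eq(7)), argues that at termination every per-vertex error is below the threshold so the values are essentially stationary, collapses Eq(7) into Eq(8) --- whose form is exactly that of the sequential equation Eq(2) --- and concludes $|pr^{u} - \widehat{pr^{u}}| \leq threshold$, supplementing this with the experimental observation that the $L_1$ norm is below a tenth of the threshold. You instead lift everything to operator form: contraction in $\ell_1$ (each column of $M$ sums to $d<1$), Banach fixed-point uniqueness, and Chazan--Miranker/Baudet asynchronous-iteration theory for convergence of the stale-read updates. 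What your route buys is precisely the step the paper glosses over: from ``the \ns output approximately satisfies the same fixed-point equation'' one cannot conclude ``the \ns output is close to the sequential output'' unless the fixed point is unique and the map is well conditioned, which is exactly what the contraction property supplies (with the honest guarantee being of order $threshold/(1-d)$ rather than $threshold$, a constant factor both you and the paper largely elide). What the paper's route buys is elementarity and self-containment --- no matrix norms or external theorems, only manipulation of the update equations --- at the cost of rigor. Your closing caution is also well placed: the two-consecutive-iteration squeezing of the preceding lemma does not by itself extend to unbounded staleness, and deferring that step to chaotic-relaxation theory (every component updated infinitely often before termination, $\rho(M) \leq d < 1$) is the correct repair rather than a direct induction.
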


\begin{proof}
The PageRank of any given vertex is evaluated from the PageRank of all of its incoming vertices. Since the threads can determine PageRank simultaneously in a different iteration, the PageRank of any given vertex is evaluated from the PageRank of incoming vertices that could be part of any iteration.
Eq(4) can be modified for the threads to be present in 1\textsuperscript{st} to i\textsuperscript{th} iteration.

\begin{equation}
    \widehat{pr_{i}^{u}} = \frac{1-d}{n} + \ d*\sum_{l=1}^{I}\sum_{v\in S_{l}^{u}}^{}\frac{\widehat{pr_{l}^{v}}}{outDegree(v)}
\end{equation}

The algorithm continues until the error of every node is less than the threshold, so the PageRank values of all nodes reach an almost constant value. With the given termination condition the Eq(7) can be modified as Eq(8) where $S_{}^{u} =\bigcup\limits_{l=1}^{I} S_{l}^{u} = \{v|(v,u)\in E\} $.
\begin{equation}
    \widehat{pr_{}^{u}} = \frac{1-d}{n} + \ d*\sum_{v\in S_{}^{u}}^{}\frac{\widehat{pr_{}^{v}}}{outDegree(v)}
\end{equation}

As per the defined termination condition, the acquired error value from the modified PageRank values is less than the threshold. Therefore, the PageRank values from the No-Sync algorithm are identical to Sequential, with the given error value less than the allocated threshold value. 


Eq(8) is exactly same as Eq(2) where $|pr^{u} - \widehat{pr^{u}}| \leq threshold$ is satisfied only at the termination condition. This Lemma is also proved experimentally and the L1 norm of the PageRank values is less than 1/10th of the threshold for all the experiments.
\end{proof}
\subsection{No\_Sync\_Edge}
Likewise to how we developed an asynchronous algorithm for a 2-phased \pr computational model, we also developed an asynchronous algorithm for the 3-phased \pr computational model.
The changes proposed in the previous algorithm are also applicable for this variant. In  Algorithm \ref{algorithm:Barriers}, \pr computations are happening in one single equation, whereas in Algorithm \ref{algorithm:Barriers-Edge}\be, we split the equation into two parts. Though we successfully developed asynchronous variants for both algorithms, this variant does not guarantee convergence for particular types of datasets. This variant resulted in better speedups when we tested it on our synthetic datasets; however, it did not converge with the given threshold for standard datasets. Since the asynchronicity is entirely random, we are still exploring the reasons behind the non-convergence of this variant.

\begin{algorithm}[H]
\scriptsize
\caption{No-sync-Edge}
\begin{algorithmic}[1]
\Procedure{Main}{Graph G = (V,E)}
\State G = ConvertCsr(V,E)
    \While {$error > threshold$}
        \State $thd\_error[*thdid] \leftarrow 0$
		\ForAll {$u \in Thread\_vertices$}
			\State $prev \leftarrow page_rank(u)$
			\State $sum \leftarrow 0$
			\For {$v \in range(inlinks[u].start, inlinks[u].end)$} \Comment{inlinks to u}
				\State $sum = sum + contributionList(v)$
			\EndFor
		    \State $page\_rank(u) = \displaystyle\frac{(1-d)}{n} + (d*sum)$
			\State $thd\_error[thd\_id] = max(thd\_error[thd\_id], prev - page\_rank(u))$
		\EndFor
		\State $interation[*thdid]++$
		\State $error \leftarrow 0$
		\For($id \in range (0, num\_of\_threads)$)
			\State $error \leftarrow max(error, thd\_error[id])$
        \EndFor
		\ForAll($u \in Thread\_vertices$)
		    \If{$outdeg(u) == 0$}
		        \State continue
		    \EndIf
			\State $contribution = \displaystyle\frac{page\_rank(u)}{outdeg(u)}$
			\For{$v \in range(outlinks[u].start, outlinks[u].end)$} \Comment{outlinks to u}
				\State $contributionList(offsetList(v)) \leftarrow contribution$
			\EndFor
		\EndFor
    \EndWhile
\EndProcedure
\end{algorithmic}
\label{algorithm:No-Sync-Edge}
\end{algorithm}
\vspace{-2cm}
\subsection{Barrier and No\_Sync\ Variants Optimization}
Many applications might not require the exact solution which can help reduce the overall computational cost. When using approximation techniques, we skip some portions of the computation to arrive at an approximate solution. This technique can significantly improve the performance by a minimum compromise on accuracy.

Loop perforation is an optimization technique that can reduce iterations without changing the definite description of an algorithm when applied to iterative algorithms. We used this technique for our \br and \ns variants as proposed in paper\cite{DBLP:conf/hipc/PanyalaSHKCK17} to compute the \pr algorithm. We made a slight modification to the author's technique, where we are eliminating the \pr computations if the absolute difference of \pr and \ppr of a node is less than ${10^{-21}}$.

\begin{algorithm}[H]
\caption{Barrier Optimization and No\-Sync Optimization Algorithm}
\scriptsize
\begin{algorithmic}[1]
\setlength{\lineskip}{3pt}
                 \vdots \\
    \While{$error > \  threshold$}
        \State $threadErr[q]  \leftarrow 0$ \Comment{Initialize thread's error}
        \ForAll{ $u \in threadVertices(T_{i}) $ }
            \State $pr(u)\leftarrow \displaystyle\frac{(1-d)}{n} $
            \If{$threshold\_check[i] == False$}
                \ForAll{ \textit{u} $\in$ \textit{V}  such  that (v,u) $\in$ E}
                    \hspace{2cm} \State $pr(u) = pr(u)+ contribution(v) * d$ 
                \EndFor
                \hspace{1.5cm} \State$threadErr[T_{i}]=max(threadErr[T_{i}],\left| prPrev(u)-pr(u) \right|)$
                \If{$\left| prPrev(u)-pr(u) \right| !=0  \&\& \left| prPrev(u)-pr(u) \right|<threshold*0.00001$}
                    \State $threshold\_check[i] = True$
                \EndIf
            \EndIf
        \EndFor
    \EndWhile
    
    \vdots
\end{algorithmic}
\label{algorithm:Barriers-No-Sync-Opt}
\end{algorithm}

\subsection{Barrier\_Helper} 

Barrier helper is a wait-free algorithm centered around solving thread delay and failure issues, thus ensuring an algorithm's correctness. In each iteration in the barrier helper algorithm, threads may not enter the next iteration unless the PageRank value of each node is calculated for the given iteration. The main motive behind this Barrier-Helper is that threads completed with their task are re-assigned to other threads to help them compute the PageRank for that iteration, thus actively avoiding failure and delay scenarios. This process continues until the PageRank value has been calculated for each node in that iteration. 
Algorithm-\ref{algorithm:Barriers-Helper} explains the flow for PageRank computation of vertices using the wait-free approach. \enquote{Please refer to our conference paper\cite{DBLP:conf/pdp/EediPRU21} for details}. 

\begin{lstlisting}[ ]
struct ThreadCASObj{
    int itrNum;
    int currNode;
    double thErr;
};
struct GlobalCASObj{
    int itrNum;
    double err;
    vector<bool>check;
    bool intermediate;
};
struct PrCASObj{
    int itrNum;
    double rank;
};
\end{lstlisting}

\begin{algorithm}[!ht]
\scriptsize
\caption{Wait-Free}
\begin{algorithmic}[1]

\Procedure{updatePageRank}{u, vertexPr, threadVar}
	\State $tmp \leftarrow pr(u) $
	\If{$tmp.itrNum == threadVar.itrNum$}
		\State $casObj \leftarrow new PrCASObj(threadVar.itrNum++, vertexPr)$
		\State $CAS(pr(u), tmp, casObj)$
	\EndIf

	\State $tmp \leftarrow prevPr(u) $
	\If{$tmp.itrNum == threadVar.itrNum$}
		\State $casObj \leftarrow new PrCASObj(threadVar.itrNum++, vertexPr)$
		\State $CAS(prevPr[u], tmp, casObj)$
	\EndIf
\EndProcedure
\vspace{0.05cm}
\hrule
\vspace{0.05cm}

\Procedure{computePR}{threadId, helpId, threadVar}
	\State $threadInfo \leftarrow  globalThreadInfo[helpId].load()$

	\While{$u \in ThreadVertices$ and $globalVar.itrNum == threadVar.itrNum$}
		
		\State $vertexPr \leftarrow \frac{(1-d)}{n}$
		\ForAll{$ v \in V$ such that $(v, u) \in E$}
     		\State $vertexPr += \displaystyle\frac{globalPrevPr[v]}{outDeg(v)} * d $
   		\EndFor

   		\State Invoke updatePageRank(u,vertexPr,threadVar)

   		\State $tmp = globalThreadInfo[helpId]$
   		\If{$tmp.itrNum == threadVar.itrNum$}
   			\State $er \leftarrow max(tmp.er,\left | vertexPr - prevPr \right | )$
   			\State $casObj \leftarrow new ThreadCASObj(tmp.itrNum,next(u, hepId), er) $
   			\State $CAS(globalThreadInfo[helpId],tmp,casObj) $
   		\EndIf
		\EndWhile
\EndProcedure
\vspace{0.05cm}
\hrule
\vspace{0.05cm}

\Procedure{UpdateGlobalVariable}{(thId,helpId,threadVar}

	\While{$true$} 
		\State $tmp \leftarrow globalVar$
		\If{$tmp.itrNum == threadVar.itrNum$}
			\State $casObj \leftarrow copy(tmp)$
			\State $casObj.check[helpId] \leftarrow true$
			\State $casObj.er \leftarrow max(casObj.er,globalThreadInfo[helpId].er)$
		 	\State $CAS(globalVar,tmp,casObj) break $
		\EndIf
	\EndWhile

	\While{$true$}
		\State $tmp \leftarrow globalThreadInfo[helpId]$
		\If{$tmp.itrNum == threadVar.itrNum$}
			\State $casObj \leftarrow new ThreadCASObj(tmp.itrNum++,threadId,0)$
			\State $CAS(globalThreadInfo[helpId],tmp,casObj) break$
		\EndIf
	\EndWhile
	
\EndProcedure
\vspace{0.05cm}
\hrule
\vspace{0.05cm}

\Procedure{computeThreadPageRank}{threadId}
	\While{$globalVar.load().er > threshold$}

		\State Invoke computePR(threadId,threadId,threadVar)
		\ForAll{$ thr \in threads$ and $thr != threadId$ and notCompletePR(thr)}
			\State Invoke computePR(thr,threadId,threadVar)
		\EndFor
		\State Intialize error value to 0 for next iteration in globalVar using CAS
		\State Invoke UpdateGlobalVariable(threadId,threadId,threadVar)
		\ForAll{$thr \in threads$ and $thr != threadId$ and notCompleteglobalVar(thr)}
			\State Invoke UpdateGlobalVariable(thr,threadId,threadVar)
		\EndFor

		\State Intialize itrNum in globalVar using CAS

		\State $threadVar \leftarrow globalVar.load()$

		\State Execute SwapFun() 
\EndWhile
\EndProcedure

\end{algorithmic}
\label{algorithm:Barriers-Helper}
\end{algorithm}

\clearpage

\section{Experiments Evaluation}
\subsection{Platform}

The simulations were conducted on a 56 core Intel® Xeon® E5-2660 v4 processor that runs at 2.06 GHz core frequency. With this architecture, the two CPU sockets support 14 cores each, with each core supporting up to two logical threads. Also, every core has an L1-32K, L2 – 256K cache memory specific to the core, and L3 – 35840K cache memory. Implementations are coded in C/C++, and compilation was performed using g++ 7.5.0 and POSIX Multithreaded library. 

\subsection{Datasets}

The experiment uses synthetic and real-world datasets for the simulations. There are three categories of real-world datasets taken from SNAP repository \cite{snapnets} and randomly generated synthetic datasets . All the datasets have been considered keeping the studies \cite{DBLP:conf/ppopp/ShunB13},\cite{DBLP:conf/icdcn/GargK16},\cite{DBLP:journals/fgcs/LuoL20} in purview for providing a comparison. The datasets mentioned above are shown in Table 1. The initial experimentation was conducted on synthetic graphs generates randomly in the range $1*10^6 to 7*10^6$ using the RMAT graph library\cite{DBLP:conf/sdm/ChakrabartiZF04}. Further experiments were conducted on the standard datasets repository – Web-Graphs, Social-Networks, and Road-Networks. The formats of the graph dataset sizes are in Adjacency List \cite{DBLP:journals/fgcs/LuoL20} and are converted later to Compressed Sparse Row(CSR) format, and all the codes are tested with the given datasets.  

\begin{table}[!htbp]
\renewcommand{\arraystretch}{1.3}

\caption{Real-world and Synthetic Graph Datasets}
\label{table:datasets}
\centering
\begin{tabular}{|c|c|c|c|}
\textbf{Input} & \textbf{\# of vertices} & \textbf{\# of Edges} & \textbf{Size in MB} \\
\hline
\multicolumn{4}{|c|}{ \textbf{\textit{ Web Graphs \cite{snapnets}}}}\\
\hline
webStanford & 281903 & 2312497 & 30\\
\hline
webNotreDame & 325729 & 1497134 & 20\\
\hline
webBerkStan & 685230 & 7600595 & 20\\
\hline
webGoogle & 875713 & 5105039 & 7\\
\hline
\multicolumn{4}{|c|}{ \textbf{\textit{ Social Networks \cite{nr}}}}\\
\hline
socEpinions1 & 75879 & 508837 & 5.7\\
\hline
Slashdot0811 & 77360 & 905468 & 10.7\\
\hline
Slashdot0902 & 82168 & 948464 & 11.3\\
\hline
socLiveJournal1 & 4847571 & 68993773 & 1100\\
\hline
\multicolumn{4}{|c|}{ \textbf{\textit{ Road Networks\cite{nr}}}}\\
\hline
roaditalyosm & 6686493 & 7013978 & 109.9\\
\hline
greatbritainosm & 7700000 & 8200000 & 28\\
\hline
asiaosm & 12000000 & 12700000 & 5.1\\
\hline
germanyosm & 11500000M & 12400000 & 98.5\\
\hline
\multicolumn{4}{|c|}{\textbf{\textit{ Synthetic Graphs \cite{DBLP:conf/sdm/ChakrabartiZF04} \cite{DBLP:journals/fgcs/LuoL20} }}}\\
\hline
D10 & 491550 & 0999999 & 13.2\\
\hline
D20 & 954225 & 1999999 & 28.3\\
\hline
D30 & 1400539 & 2999999 & 43.3\\
\hline
D40 & 1871477 & 3999999 & 59.0\\
\hline
D50 & 2303074 & 4999999 & 74.1\\
\hline
D60 & 2759417 & 5999999 & 89.9\\
\hline
D70 & 3222209 & 6999999 & 105.6\\
\hline
\end{tabular}
\end{table}
------------------------------------------------------------------

\subsection{Results and Discussion}
The section presents the speed-up obtained with the parallel variants of the PageRank algorithm. The algorithm's speed-up is calculated using the ratio of Sequential execution time vs. Parallel execution time. The programs are executed with a fixed number of threads(56) on different classes of datasets in order to obtain the execution times. The proposed algorithms have shown significant improvement at the hardware level by incorporating them alongside current graphs processing techniques.\\

\begin{figure}[H]
\vspace*{-3.5cm}
\centering
\includegraphics[width=6in]{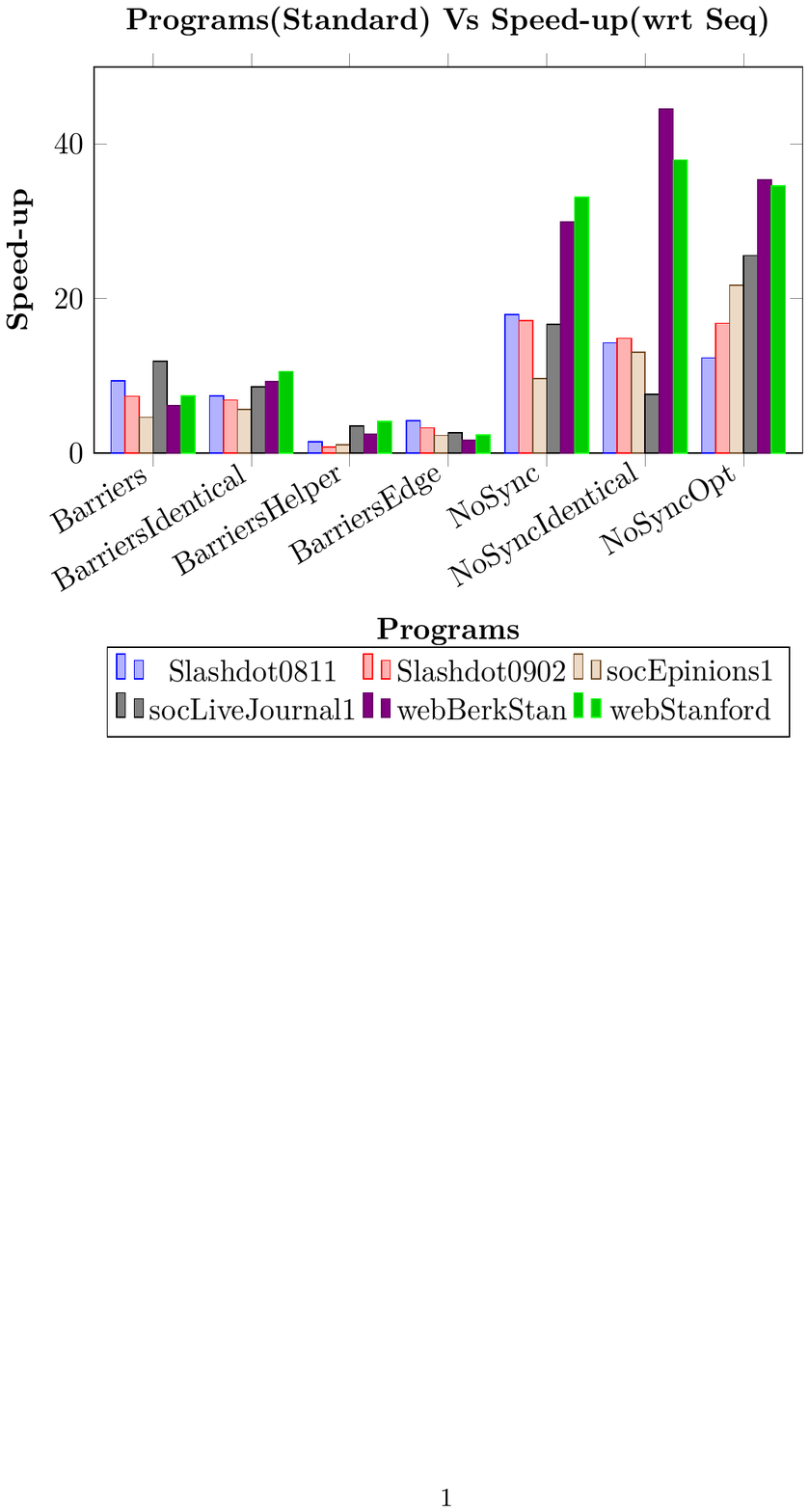}
\DeclareGraphicsExtensions.
\vspace*{-10cm}
\caption{Speed-Up Vs Programs on Standard Datasets (56 Threads)}
\label{fig:lableFig1}
\end{figure}

\figref{lableFig1} Shows the speedups obtained by parallel variants(blocking and non-blocking variants) on standard datasets using 56 threads.  Barrier variants result in a maximum of 10x on standard datasets, whereas \ns variants (except for \nse) consistently produce greater than 10x speedup on almost all datasets. It is observed from the results that \ns, \nsi, \nso and \nsoi, are giving better performance than the \br, \bri and \be on all the datasets. We achieve substantial performance benefits by removing the barriers and allowing partial computations on shared variables to eliminate iteration-level dependency and thread-level dependency. Thus it can be concluded that asynchronous variants outperform synchronous variants by a considerable magnitude. As each thread progresses independently and completes the given task, we achieve the lock-free property on the \ns variants. We conclude that the lock-free variants of the PageRank algorithm provide better performance improvements compared to the other variants. The notion behind the \wf variant is to display the sustainability of the current program execution and hence is not explicitly designed for performance. Since we are not using any compiler optimization flags, the \be variant is not as performant. 

\figref{lableFig2} shows the speedups obtained by parallel variants on synthetic datasets. The insights noted in \figref{lableFig1} are also applicable here for synthetic datasets.  Barrier variants result in a maximum of 5x speedup on synthetic datasets, whereas No-Sync variants (except for \nse) consistently produce greater than 10x speedup on almost all datasets. It is observed for Synthetic datasets that as size increases, \ns variants consistently outperform \br variants in terms of performance.
\begin{figure}[!ht]
\vspace*{-3cm}
\centering
\includegraphics[width=6in]{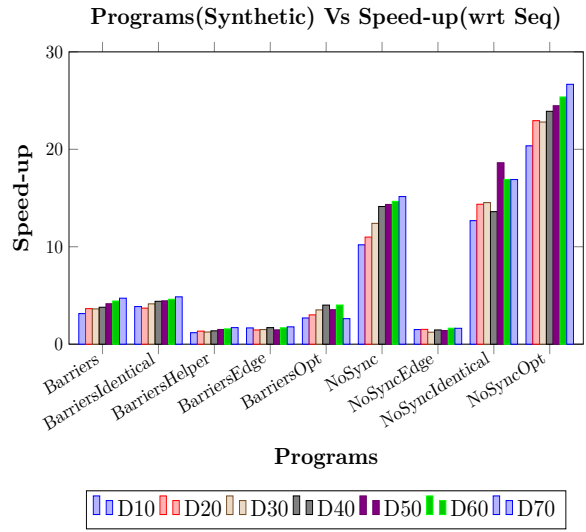}
\DeclareGraphicsExtensions.
\vspace*{-10cm}
\caption{Speed-Up Vs Programs on Synthetic Datasets (56 Threads)}
\label{fig:lableFig2}
\end{figure}
\\
\figref{lableFig3} and \figref{lableFig4} shows the speedups gained by the parallel version by varying thread count on randomly selected datasets  (web-stanford a standard dataset and D70 a synthetic dataset). We apply the static load balancing technique to all parallel variants. With an increase in the number of threads, the speedup rate is significantly less for barrier variants than the \ns variants since each thread has to wait for others in the barrier variants. This also leads \ns variants to have much better scalability in comparison to barrier variants. On the other hand, in \ns variants, as each thread progresses independently, we achieve a higher speedup with a higher thread value. These results suggest, our lock-free variant scales well with the increase in the number of threads.\\
\begin{figure}[!ht]
\vspace*{-3.5cm}
\centering
\includegraphics[width=6in]{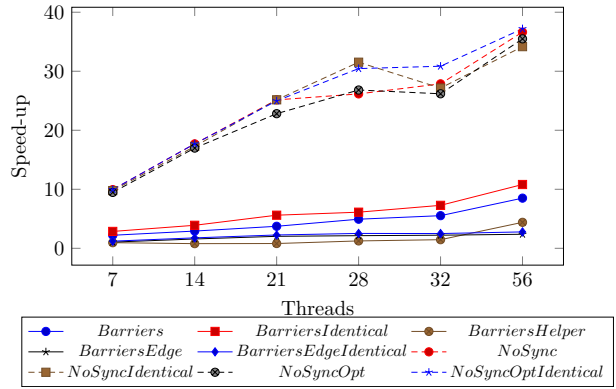}
\DeclareGraphicsExtensions.
\vspace*{-11cm}
\caption{PageRank Speed-Up with Varying threads for web-Stanford Dataset}
\label{fig:lableFig3}
\end{figure}

\begin{figure}[!ht]
\vspace*{-3cm}
\centering
\includegraphics[width=6in]{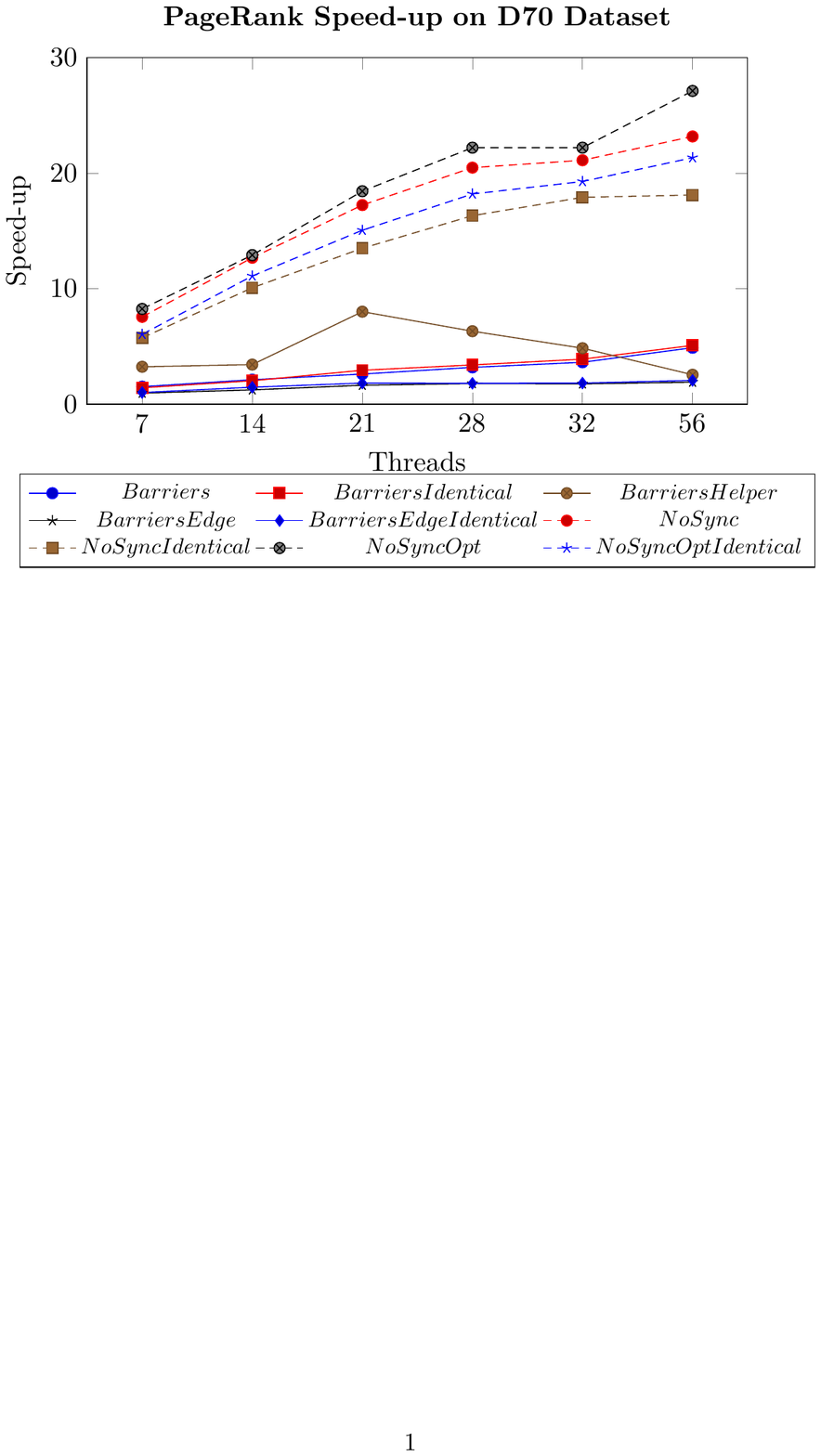}
\DeclareGraphicsExtensions.
\vspace*{-11cm}
\caption{PageRank Speed-Up with Varying threads for D70 Dataset}
\label{fig:lableFig4}
\end{figure}

\figref{lableFig5} and \figref{lableFig6} show the speedup and L1-norm obtained by parallel variants on a randomly selected dataset (web-stanford a standard dataset and D70 a synthetic dataset) with a fixed thread count(56). The summation of differences between PageRank of each node from sequential and parallel variants denotes L1-Norm. For most Barrier variants, the L1-norm is zero, which means the page rank values are equal to the sequential ones. No-Sync algorithms, except approximation algorithms on all datasets, is achieving a zero L1-norm. The value is high for \nso and \nsoi as we are performing the loop-perforation technique and skipping the computations when its PageRank value is less than $10^{-21}$. The result of using the above approximation technique leads to an increase in speedup and L1-Norm.\\
\begin{figure}[!ht]
\vspace*{-3cm}
\centering
\includegraphics[width=6in]{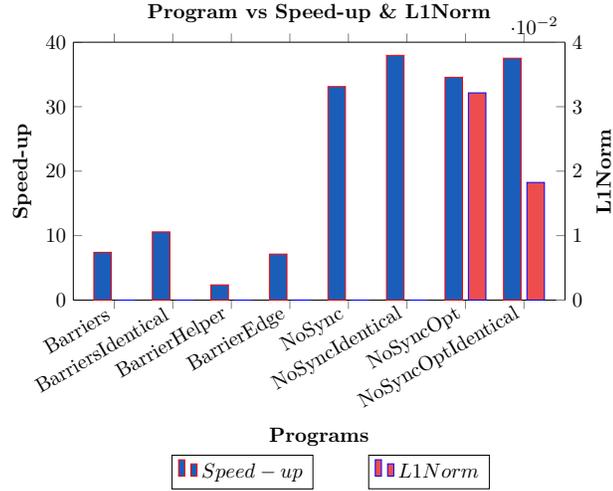}
\DeclareGraphicsExtensions.
\vspace*{-10cm}
\caption{PageRank Speed-Up and L1-Norm for web-standard Dataset)}
\label{fig:lableFig5}
\end{figure} 


\begin{figure}[!ht]
\vspace*{-3cm}
\centering
\includegraphics[width=6in]{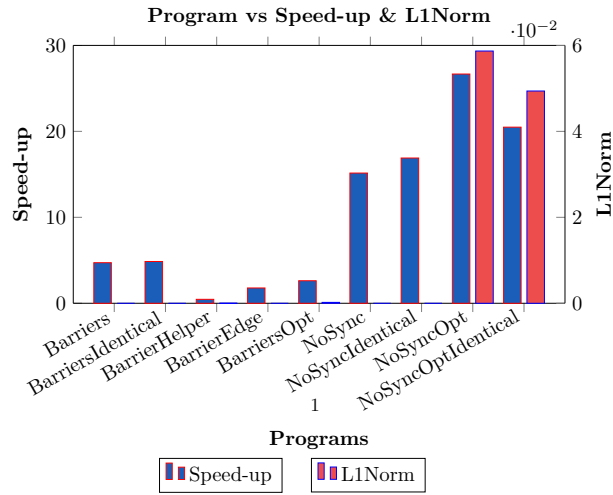}
\DeclareGraphicsExtensions.
\vspace*{-10cm}
\caption{PageRank Speed-Up and L1-Norm for D70 Dataset)}
\label{fig:lableFig6}
\end{figure}

In \figref{lableFig7}, we compare the number of iterations taken by each parallel variant. Ideally, we expect each variant to achieve convergence with the same number of iterations. In our case, as we are allowing threads to do partial updates on shared variables that depend on the convergence, \ns variants are taking a fewer number of iterations than barrier variants. Our lock-free variant not only gives better speedup but it also converges faster. Prior to this work, we knew about node-level convergence and algorithm-level convergence on the iterative algorithm, but to our knowledge, we are the first ones to propose thread-level convergence.\\
\begin{figure}[!ht]
\vspace*{-3.5cm}
\centering
\includegraphics[width=6in]{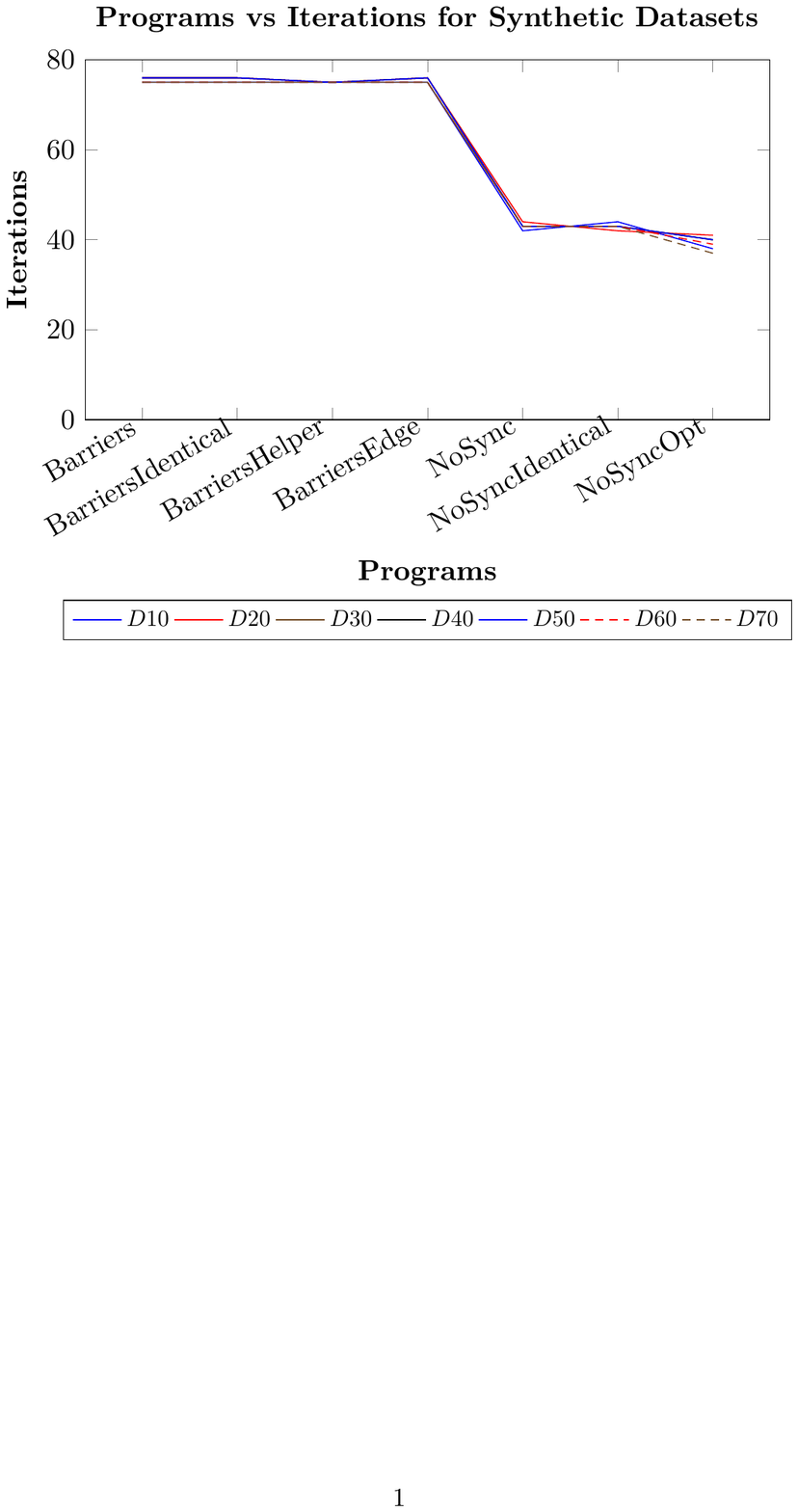}
\DeclareGraphicsExtensions.
\vspace*{-10cm}
\caption{Program Vs \# of Iterations on Synthetic Datasets (56 Threads)}
\label{fig:lableFig7}
\end{figure}

\textbf{Sleeping variants:}  We designed a case study to understand the importance of the \wf algorithm using predetermined steps of calling sleep function to threads during selected iterations. It was observed that every thread waited until the sleeping thread completed its task in the Barrier algorithm, while the task corresponding to the sleeping thread was resumed as soon as the thread woke up in the case of the No-sync approach. Nevertheless, the \wf(Barrier-helper) algorithm was vigorous to address both limitations. The \wf approach is designed so that threads do not have to wait for other threads; instead, they aid other threads in completing their tasks after completing their assigned work. \figref{lableFig8} displays the consistency of the \wf execution time even with an increase in sleep time; however, the election time of No-Sync and Barriers increases as sleep time increases. 

\begin{figure}[!ht]
\vspace*{-3.5cm}
\centering
\includegraphics[width=6in]{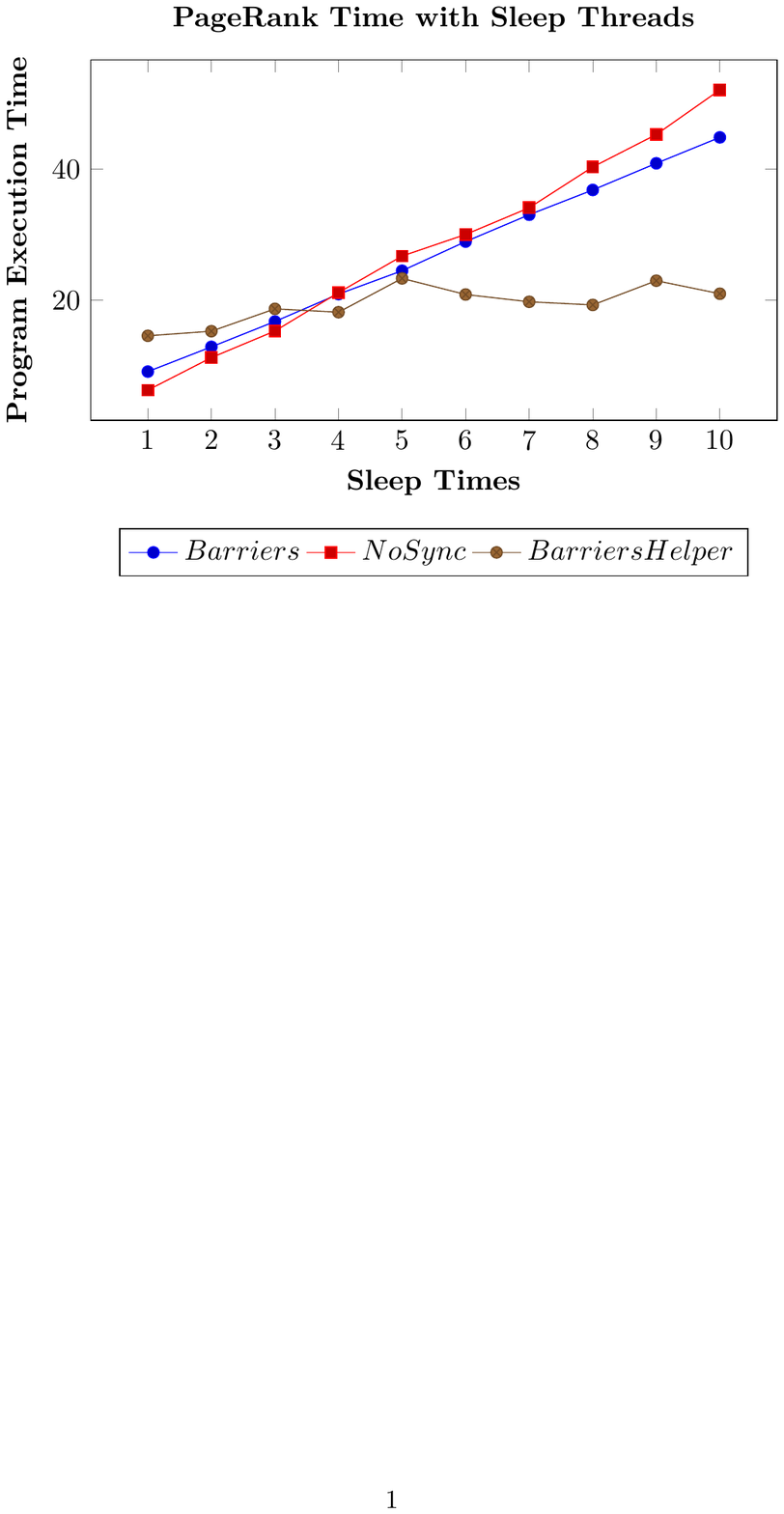}
\DeclareGraphicsExtensions.
\vspace*{-11.5cm}
\caption{Program Execution Time with increase in Sleep Thread}
\label{fig:lableFig8}
\end{figure}


\begin{figure}[!ht]
\vspace*{-3.5cm}
\centering
\includegraphics[width=6in]{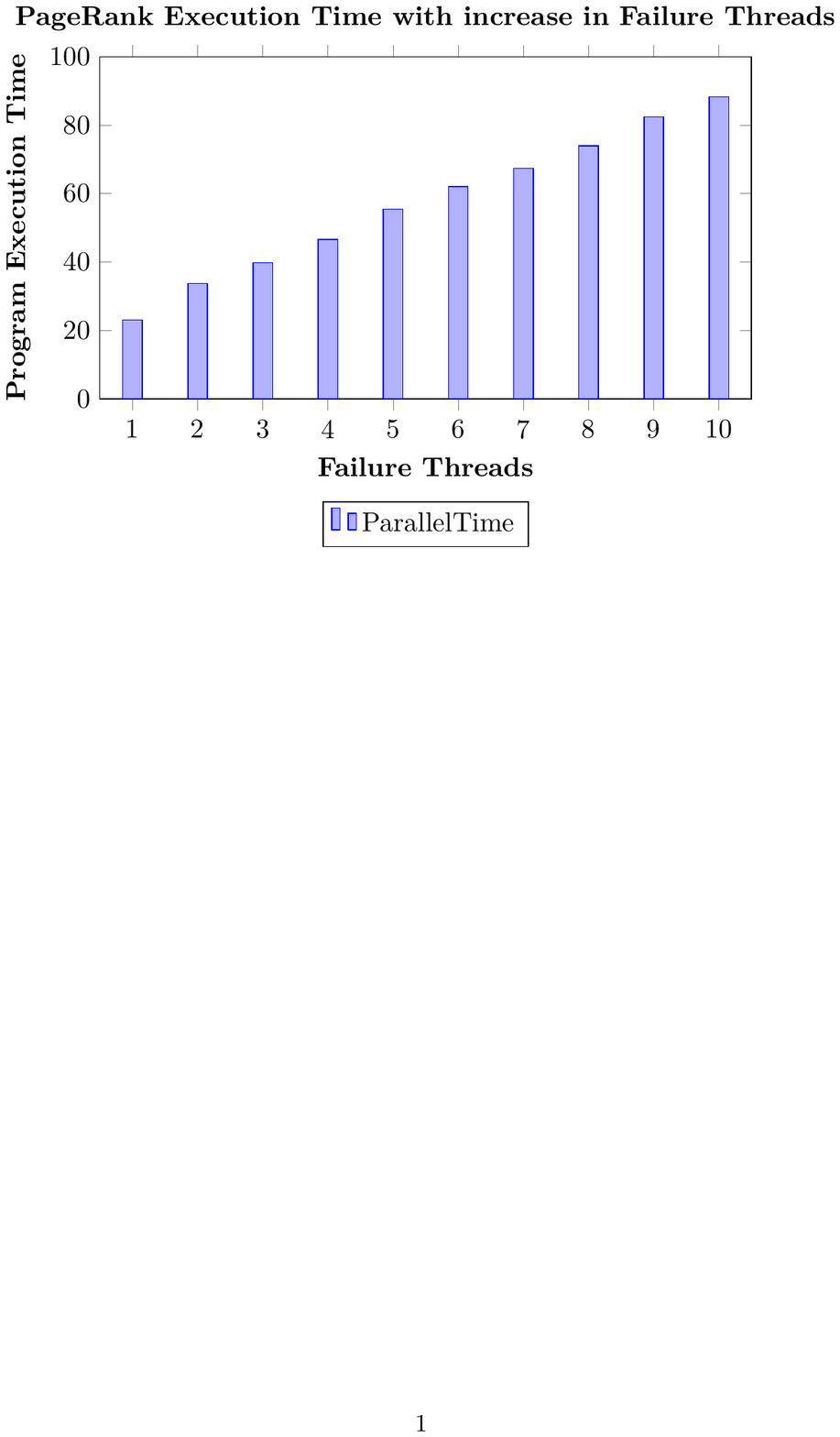}
\DeclareGraphicsExtensions.
\vspace*{-11.5cm}
\caption{Results showing Execution Time for the Failed Threads}
\label{fig:lableFig9}
\end{figure}

\textbf{Failing variants:} \wf algorithm handle thread failures, whereas other parallel algorithms fail to do so. Failures to the threads were added deterministically during the end of the initial iteration to study the effect. As \figref{lableFig9} displays, with the increase in thread failures, the program execution time has increased.


\section{Conclusion and Future work}
The paper proposes a Non-Blocking, i.e., a No-sync and wait-free algorithmic approach for implementing the parallel version of PageRank algorithm on Multiprocessor Systems. This approach has replaced the techniques found in state-of-the-art approaches like Lock-Based and Barrier synchronization algorithms. The results obtained on simulations on various graphs show that the approach proposed in the paper achieves better performance when integrated with the existing methods. The approach with the non-blocking variant applied on iterative algorithms shows performance improvements, as shown by the paper's results.

In the future, the proposed approach will be integrated with existing graph frameworks as an advancement to work. Applying the current proposed approach to the iterative algorithms is also anticipated as further work. 

 The source code is available on \footnote{https://github.com/PDCRL/PageRank}


%

%

\begin{acknowledgements}
This research is supported by the following grants: (a) "Concurrent and Distributed Programming primitives and algorithms for Temporal Graphs" funded by NSM, GoI. (b) "Tools for Large-Scale Graph Analytics"
funded by Intel, USA.
\end{acknowledgements}

%
%


\bibliographystyle{spbasic_unsort.bst}
\bibliography{MainPageRank.bib}

%
%
%
%


\end{document}